\documentclass[twoside, journal]{IEEEtran}

\usepackage{epsfig}
\usepackage{graphicx}
\usepackage{amsmath}
\usepackage{amssymb}
\usepackage{float}
\usepackage{url}
\interdisplaylinepenalty=2500

\pagestyle{empty}
\setlength{\leftmargini}{0\leftmargini}

\newtheorem{lemma}{Lemma}

\newtheorem{theorem}{Theorem}

\newcommand{\naturals}{\ensuremath{\mathbb{N}}}
\newcommand{\reals}{\ensuremath{\mathbb{R}}}
\newcommand{\pr}{\ensuremath{\mathbb{P}}}
\newcommand{\expectation}{\ensuremath{\mathbb{E}}}

\begin{document}

\title{\huge{Moderate Deviations Analysis of Binary Hypothesis Testing}}

\author{\IEEEauthorblockN{Igal Sason\\
\hspace*{-0.4cm} \url{sason@ee.technion.ac.il}\\
Department of Electrical Engineering\\ Technion, Haifa 32000,
Israel}}

\maketitle\thispagestyle{empty}

\begin{abstract}
This work refers to moderate-deviations analysis of
binary hypothesis testing. It relies on a concentration
inequality for discrete-parameter martingales with bounded jumps,
which forms a refinement to the Azuma-Hoeffding
inequality. Relations of the analysis to the moderate deviations
principle for i.i.d. random variables and the relative entropy
are considered.
\end{abstract}

\begin{keywords}
Concentration inequalities, hypothesis testing, moderate
deviations principle.
\end{keywords}

\section{Introduction}
\label{section: introduction} The moderate deviations analysis in
the context of source and channel coding has recently attracted
some interest among information theorists (see \cite{Abbe_thesis},
\cite{AltugW_ISIT2010}, \cite{He_IT09},
\cite{Polyanskiy_Poor_Verdu_IT2010}, \cite{Sason_submitted_paper}
and \cite{Tan_arxiv11}). The purpose of this paper is to consider
moderate deviations analysis for binary hypothesis testing.

In the following, related literature on moderate deviations
analysis in information-theoretic aspects is shortly reviewed.
Moderate deviations were analyzed in
\cite[Section~4.3]{Abbe_thesis} for a channel model that gets
noisier as the block length is increased. Due to the dependence of
the channel parameter in the block length, the usual notion of
capacity for these channels is zero. Hence, the issue of
increasing the block length for the considered type of degrading
channels was examined in \cite[Section~4.3]{Abbe_thesis} via
moderate deviations analysis when the number of codewords
increases sub-exponentially with the block length. In another
recent work \cite{AltugW_ISIT2010}, the moderate deviations
behavior of channel coding for discrete memoryless channels was
studied by Altug and Wagner with a derivation of direct and
converse results which explicitly characterize the rate function
of the moderate deviations principle (MDP). In
\cite{AltugW_ISIT2010}, the authors studied the interplay between
the probability of error, code rate and block length when the
communication takes place over discrete memoryless channels,
having the interest to figure out how the decoding error
probability of the best code scales when simultaneously the block
length tends to infinity and the code rate approaches the channel
capacity. The novelty in the setup of their analysis was the
consideration of the scenario mentioned above, in contrast to the
case where the rate is kept fixed below capacity, and the study is
reduced to a characterization of the dependence between the two
remaining parameters (i.e., the block length $n$ and the average/
maximal error probability of the best code). As opposed to the
latter case when the code rate is kept fixed, which then
corresponds to large deviations analysis and characterizes the
error exponents as a function of the rate, the analysis in
\cite{AltugW_ISIT2010} (via the introduction of direct and
converse theorems) demonstrated a sub-exponential scaling of the
maximal error probability in the considered moderate deviations
regime. This work was followed by a work by Polynaskiy and
Verd\'{u} where they show that a DMC satisfies the MDP if and only
if its channel dispersion is non-zero, and also that the AWGN
channel satisfies the MDP with a constant that is equal to the
channel dispersion. The approach used in \cite{AltugW_ISIT2010}
was based on the method of types, whereas the approach used in
\cite{Polyanskiy_Verdu_Allerton2010} borrowed some tools from a
recent work by the same authors in
\cite{Polyanskiy_Poor_Verdu_IT2010}.

In \cite{He_IT09}, the moderate deviations analysis of the
Slepian-Wolf problem for lossless source coding was studied. More
recently, moderate deviations analysis for lossy source coding of
stationary memoryless sources was studied in \cite{Tan_arxiv11}.

These works, including this paper, indicate a recent interest in
moderate deviations analysis in the context of
information-theoretic problems. In the literature on probability
theory, the moderate deviations analysis was extensively studied
(see, e.g., \cite[Section~3.7]{Dembo_Zeitouni}), and in particular
the MDP was studied in \cite{Dembo_paper96} for continuous-time
martingales with bounded jumps.

This paper has the following structure: Section~\ref{section:
Concentration Inequalities via Martingales} introduces briefly
some preliminary material related to martingales and Azuma's
inequality. It then follows by introducing a refined version of Azuma's
inequality, and a study of its relation to the moderate
deviations principle for i.i.d. random variables.
Section~\ref{section: binary hypothesis testing} considers the
relation of Azuma's inequality and the refined version of this
inequality (from Section~\ref{section:
Concentration Inequalities via Martingales}) to moderate
deviations analysis of binary hypothesis testing.
Section~\ref{section: summary} concludes the paper, followed by a
discussion on the MDP that is relegated to an appendix.

\section{Concentration and Its Relation to the Moderate Deviations Principle}
\label{section: Concentration Inequalities via Martingales}

We present here some essential material that is related to the
martingale approach used in this paper for the moderate-deviations
analysis of binary hypothesis testing. A background on martingales
is provided in, e.g., \cite{Williams} where we only rely here on
basic knowledge on martingales.

%

\subsection{Azuma's Inequality} \label{subsection:
Azuma's inequality} Azuma's inequality\footnote{Azuma's inequality
is also known as the Azuma-Hoeffding inequality. It will be named
from this point as Azuma's inequality for the sake of brevity.}
forms a useful concentration inequality for bounded-difference
martingales \cite{Azuma}. In the following, this inequality is
introduced. The reader is referred to, e.g., \cite{survey2006}
and \cite{McDiarmid_tutorial} for
surveys on concentration inequalities for martingales (including a
proof of this inequality).

\begin{theorem}{\bf[Azuma's inequality]}
Let $\{X_k, \mathcal{F}_k\}_{k=0}^{\infty}$ be a
discrete-parameter real-valued martingale sequence (where
$\mathcal{F}_0 \subseteq \mathcal{F}_1 \subseteq \ldots$ is called
a filtration). Assume that for every $k \in \naturals$, the
condition $ |X_k - X_{k-1}| \leq d_k$ holds a.s. for some
non-negative constants $\{d_k\}_{k=1}^{\infty}$. Then
\begin{equation}
\pr( | X_n - X_0 | \geq r) \leq 2 \exp\left(-\frac{r^2}{2
\sum_{k=1}^n d_k^2}\right) \, \quad \forall \, r \geq 0.
\label{eq: Azuma's concentration inequality - general case}
\end{equation}
\label{theorem: Azuma's concentration inequality}
\end{theorem}

The concentration inequality stated in Theorem~\ref{theorem:
Azuma's concentration inequality} was proved in \cite{Hoeffding}
for independent bounded random variables, and it was later derived
in \cite{Azuma} for bounded-difference martingales. 

\subsection{A Refined Version of Azuma's Inequality}

\begin{theorem}
Let $\{X_k, \mathcal{F}_k\}_{k=0}^{\infty}$ be a
discrete-parameter real-valued martingale. Assume that, for some
constants $d, \sigma > 0$, the following two requirements are
satisfied a.s.
\begin{eqnarray*}
&& | X_k - X_{k-1} | \leq d, \\
&& \text{Var} (X_k | \mathcal{F}_{k-1}) = \expectation \bigl[(X_k
- X_{k-1})^2 \, | \, \mathcal{F}_{k-1} \bigr] \leq \sigma^2
\end{eqnarray*}
for every $k \in \{1, \ldots, n\}$. Then, for every $\alpha \geq
0$,
\begin{equation}
\hspace*{-0.2cm} \pr(|X_n-X_0| \geq \alpha n) \leq 2 \exp\left(-n
\, D\biggl(\frac{\delta+\gamma}{1+\gamma} \Big|\Big|
\frac{\gamma}{1+\gamma}\biggr) \right) \label{eq: first refined
concentration inequality}
\end{equation}
where
\begin{equation}
\gamma \triangleq \frac{\sigma^2}{d^2}, \quad \delta \triangleq
\frac{\alpha}{d}  \label{eq: notation}
\end{equation}
and $D(p || q) \triangleq p \ln\Bigl(\frac{p}{q}\Bigr) + (1-p)
\ln\Bigl(\frac{1-p}{1-q}\Bigr)$ for $p, q \in [0,1]$ is the
divergence (a.k.a. relative entropy or Kullback-Leibler distance)
between the two probability distributions $(p,1-p)$ and $(q,1-q)$.
If $\delta>1$, then the probability on the left-hand side of
\eqref{eq: first refined concentration inequality} is equal to
zero. \label{theorem: first refined concentration inequality}
\end{theorem}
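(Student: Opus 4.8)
The plan is to bound the moment generating function of the martingale difference sequence and then optimize the resulting Chernoff-type exponent. Since $\{X_k\}$ is a martingale, the increments $\xi_k \triangleq X_k - X_{k-1}$ satisfy $\expectation[\xi_k \mid \mathcal{F}_{k-1}] = 0$ together with the two hypotheses $|\xi_k| \leq d$ and $\expectation[\xi_k^2 \mid \mathcal{F}_{k-1}] \leq \sigma^2$ almost surely. The first move is to establish, for every $t \geq 0$, a uniform almost-sure bound on the conditional exponential moment of the form
\begin{equation*}
\expectation\bigl[e^{t \xi_k} \mid \mathcal{F}_{k-1}\bigr] \leq \frac{\gamma e^{-td} + e^{t \gamma d}}{1 + \gamma},
\end{equation*}
where $\gamma = \sigma^2/d^2$ as in \eqref{eq: notation}. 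The natural way to get this is the standard convexity argument: on the interval $[-d,d]$ the convex function $x \mapsto e^{tx}$ lies below the chord joining the endpoints $-d$ and $d$, so $e^{t\xi_k} \leq \frac{d - \xi_k}{2d}\, e^{-td} + \frac{d + \xi_k}{2d}\, e^{td}$. Taking conditional expectations kills the linear term in $\xi_k$ but leaves a constant, giving a bound $\frac{1}{2}(e^{td} + e^{-td})$ that does \emph{not} use the variance. To exploit the variance constraint I instead compare against the parabola interpolating $e^{tx}$ at the two points $x = -d$ and $x = \gamma d$; the key geometric fact is that the chord on $[-d, \gamma d]$ dominates $e^{tx}$ there, and one checks that the worst case over all conditional laws with mean $0$ and second moment $\leq \sigma^2$ supported in $[-d,d]$ is the two-point distribution putting mass on $-d$ and $\gamma d$, which yields exactly the displayed bound.

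Once this per-step bound is in hand, the second step is routine: by the tower property and induction on $k$,
\begin{equation*}
\expectation\bigl[e^{t(X_n - X_0)}\bigr] \leq \left(\frac{\gamma e^{-td} + e^{t\gamma d}}{1+\gamma}\right)^{\! n}.
\end{equation*}
Applying Markov's inequality to the event $\{X_n - X_0 \geq \alpha n\}$ gives, for every $t \geq 0$,
\begin{equation*}
\pr(X_n - X_0 \geq \alpha n) \leq \exp\!\left(-n\Bigl[t \alpha - \ln\tfrac{\gamma e^{-td} + e^{t\gamma d}}{1+\gamma}\Bigr]\right),
\end{equation*}
and the same bound holds for $X_0 - X_n$ by applying the argument to $\{-X_k\}$, costing the factor $2$ in \eqref{eq: first refined concentration inequality}. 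It then remains to optimize the exponent over $t \geq 0$.

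The third and final step is to carry out this one-dimensional optimization and recognize the optimizer as the relative entropy. Writing $\delta = \alpha/d$ and substituting $u = e^{t d (1 + \gamma)}$, the stationarity condition is a linear equation in $u$ whose solution I expect to be
\begin{equation*}
e^{t^\star d(1+\gamma)} = \frac{(\delta + \gamma)(1 - \frac{\delta}{\gamma}\cdot\frac{\gamma}{1})}{\cdots},
\end{equation*}
more cleanly, the optimizing $t^\star$ makes the supremum equal to $D\bigl(\tfrac{\delta+\gamma}{1+\gamma} \,\big\|\, \tfrac{\gamma}{1+\gamma}\bigr)$; substituting $p = \frac{\delta+\gamma}{1+\gamma}$ and $q = \frac{\gamma}{1+\gamma}$ into the definition of $D$ and simplifying should reproduce the exponent after the logarithms and linear terms collect. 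This is the computational heart of the argument but is a Legendre-transform identity of the type that appears in the Cram\'er bound for a two-point law. The main obstacle is the first step: proving that the two-point distribution on $\{-d, \gamma d\}$ is genuinely extremal among all admissible conditional distributions, i.e.\ that the parabola/chord comparison is the correct one and is tight. One must verify that the chord through the interpolation points dominates $e^{tx}$ on the relevant interval and that imposing the variance bound (rather than an equality) does not change the maximizer when $\delta \leq 1$; the condition $\delta > 1$ corresponds to $\alpha > d$, where the event is almost surely empty, explaining the degenerate case noted in the statement.
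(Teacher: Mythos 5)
Your overall architecture---a per-step bound on the conditional moment generating function, the tower property, a Chernoff/Markov step, and a Legendre-transform computation that produces the binary divergence---is exactly the route taken in the references that the paper cites in lieu of a proof (McDiarmid; Dembo--Zeitouni, Corollary~2.4.7; Sason, Section~III). However, your key step is wrong as stated. The claimed inequality $\expectation\bigl[e^{t\xi_k}\mid\mathcal{F}_{k-1}\bigr]\leq\frac{\gamma e^{-td}+e^{t\gamma d}}{1+\gamma}$ for $t\geq 0$ has the extremal distribution mirrored: the worst case for the \emph{upper} tail is the two-point law putting mass $\frac{\gamma}{1+\gamma}$ at $+d$ and mass $\frac{1}{1+\gamma}$ at $-\gamma d$, which gives the correct (Bennett) bound $\expectation\bigl[e^{t\xi_k}\mid\mathcal{F}_{k-1}\bigr]\leq\frac{\gamma e^{td}+e^{-t\gamma d}}{1+\gamma}$. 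The law you propose, on $\{-d,+\gamma d\}$, is indeed admissible (mean zero, variance $\sigma^2$), but it is extremal for the \emph{lower} tail; meanwhile the admissible law on $\{+d,-\gamma d\}$ violates your bound outright: with $d=1$, $\gamma=\tfrac14$, $t=4$, its MGF is $\tfrac15 e^4+\tfrac45 e^{-1}\approx 11.2$, whereas your bound evaluates to $\frac{\frac14 e^{-4}+e^{1}}{5/4}\approx 2.2$. The error is not cosmetic: optimizing your expression over $t\geq 0$ yields $D\bigl(\frac{\delta+1}{1+\gamma}\,\big\|\,\frac{1}{1+\gamma}\bigr)$ rather than the theorem's $D\bigl(\frac{\delta+\gamma}{1+\gamma}\,\big\|\,\frac{\gamma}{1+\gamma}\bigr)$, and it would falsely assert zero probability whenever $\delta>\gamma$, which already fails for the two-point law above.

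There is a second, related gap: even with the correct orientation, the chord argument you invoke cannot establish the per-step bound, because a chord dominates the convex function $x\mapsto e^{tx}$ only \emph{between} its interpolation points, while the increments range over all of $[-d,d]\supsetneq[-\gamma d,d]$ when $\gamma<1$ (this is precisely why the naive chord argument, which you correctly note ignores the variance, uses the endpoints $\pm d$). The standard repair---and the actual content of Bennett's lemma---is to compare $e^{tx}$ with the \emph{quadratic} $q(x)$ that is tangent to $e^{tx}$ at $x=-\gamma d$ and agrees with it at $x=d$: one verifies that $e^{tx}\leq q(x)$ for all $x\leq d$ and that the leading coefficient of $q$ is nonnegative, so taking conditional expectations and using $\expectation[\xi_k\mid\mathcal{F}_{k-1}]=0$ together with $\expectation[\xi_k^2\mid\mathcal{F}_{k-1}]\leq\sigma^2$ gives exactly $\frac{\gamma e^{td}+e^{-t\gamma d}}{1+\gamma}$, i.e., the MGF of the extremal two-point law (this also answers your worry about inequality versus equality in the variance constraint). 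With that corrected lemma, the rest of your outline---induction via the tower property, Markov's inequality, the factor $2$ from applying the argument to $\{-X_k\}$, the identification of the supremum as the Cram\'er rate function of the two-point law, and the degenerate case $\delta>1$ since $|X_n-X_0|\leq nd$ a.s.---does go through and recovers \eqref{eq: first refined concentration inequality}.
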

\begin{proof}
See \cite{McDiarmid_bounded_differences_Martingales_1989},
\cite[Corollary~2.4.7]{Dembo_Zeitouni} or \cite[Section~III]{Sason_submitted_paper}.
\end{proof}

\subsection{Relation of Theorem~\ref{theorem: first refined concentration inequality}
with the Moderate Deviations Principle for i.i.d. RVs}
\label{subsection: MDP for real-valued i.i.d. RVs}

According to the moderate deviations theorem (see, e.g.,
\cite[Theorem~3.7.1]{Dembo_Zeitouni}) in $\reals$, let
$\{X_i\}_{i=1}^n$ be a sequence of i.i.d. real-valued RVs such
that $\Lambda_X(\lambda) = \expectation[e^{\lambda X_i}] < \infty$
in some neighborhood of zero, and also assume that
$\expectation[X_i] = 0$ and $\sigma^2 = \text{Var}(X_i) > 0$. Let
$\{a_n\}_{n=1}^{\infty}$ be a non-negative sequence such that $a_n
\rightarrow 0$ and $n a_n \rightarrow \infty$ as $n \rightarrow
\infty$, and let
\begin{equation}
Z_n \triangleq \sqrt{\frac{a_n}{n}} \sum_{i=1}^n X_i, \quad
\forall \, n \in \naturals. \label{eq: Z sequence}
\end{equation}
Then, for every measurable set $\Gamma \subseteq \reals$,
\begin{eqnarray}
&& -\frac{1}{2 \sigma^2} \inf_{x \in \Gamma^0} x^2 \nonumber\\[0.1cm]
&& \leq \liminf_{n \rightarrow \infty} a_n \ln \pr(Z_n \in \Gamma) \nonumber \\
&& \leq \limsup_{n \rightarrow \infty}
a_n \ln \pr(Z_n \in \Gamma)  \nonumber \\
&& \leq -\frac{1}{2 \sigma^2} \inf_{x \in \overline{\Gamma}} x^2
\end{eqnarray}
where $\Gamma^0$ and $\overline{\Gamma}$ designate, respectively,
the interior and closure sets of $\Gamma$.

Let $\eta \in (\frac{1}{2}, 1)$ be an arbitrary fixed number, and
let $\{a_n\}_{n=1}^{\infty}$ be the non-negative sequence
$$a_n = n^{1-2\eta}, \quad \forall \, n \in \naturals$$ so that
$a_n \rightarrow 0$ and $n a_n \rightarrow \infty$ as $n
\rightarrow \infty$. Let $\alpha \in \reals^+$, and $\Gamma
\triangleq (-\infty, -\alpha] \cup [\alpha, \infty)$. Note that,
from \eqref{eq: Z sequence},
$$ \pr\left( \Big|\sum_{i=1}^n X_i \Big| \geq \alpha n^{\eta} \right)
= \pr(Z_n \in \Gamma)$$ so from the moderate deviations principle
(MDP)
\begin{equation}
\hspace*{-0.4cm} \lim_{n \rightarrow \infty} n^{1-2\eta} \; \ln
\pr\left( \Big|\sum_{i=1}^n X_i \Big| \geq \alpha n^{\eta} \right)
= -\frac{\alpha^2}{2 \sigma^2}, \; \; \forall \, \alpha \geq 0.
\label{eq: MDP for i.i.d. real-valued RVs}
\end{equation}
It is demonstrated in Appendix~\ref{appendix: MDP} that, in
contrast to Azuma's inequality, Theorem~\ref{theorem: first
refined concentration inequality} gives an upper bound on the
probability $\pr\left( \Big|\sum_{i=1}^n X_i \Big| \geq \alpha
n^{\eta} \right)$ (where $n \in \naturals$ and $\alpha \geq 0$)
which coincides with the exact asymptotic limit in \eqref{eq: MDP
for i.i.d. real-valued RVs}. The analysis in
Appendix~\ref{appendix: MDP} provides another interesting link
between Theorem~\ref{theorem: first refined concentration
inequality} and a classical result in probability theory, which
also emphasizes the significance of the refinements of Azuma's
inequality.

\section{Moderate Deviations Analysis for Binary Hypothesis Testing}
\label{section: binary hypothesis testing}

Binary hypothesis testing for finite alphabet models was analyzed
via the method of types, e.g., in \cite[Chapter~11]{Cover and
Thomas} and \cite{Csiszar_Shields_FnT}. It is assumed that the
data sequence is of a fixed length $(n)$, and one wishes to make
the optimal decision based on the received sequence and the
Neyman-Pearson ratio test.

Let the RVs $X_1, X_2 ....$ be i.i.d. $\sim Q$, and consider two
hypotheses:
\begin{itemize}
\item $H_1:  Q = P_1$.
\item $H_2:  Q = P_2$.
\end{itemize}
For the simplicity of the analysis, let us assume that the RVs are
discrete, and take their values on a finite alphabet $\mathcal{X}$
where $P_1(x), P_2(x) > 0$ for every $x \in \mathcal{X}$.

In the following, let
\begin{equation*}
L(X_1, \ldots, X_n) \triangleq \ln \frac{P_1^n(X_1, \ldots,
X_n)}{P_2^n(X_1, \ldots, X_n)} = \sum_{i=1}^n \ln
\frac{P_1(X_i)}{P_2(X_i)}
\end{equation*}
designate the log-likelihood ratio. By the strong law of large
numbers (SLLN), if hypothesis $H_1$ is true, then a.s.
\begin{equation}
\lim_{n \rightarrow \infty} \frac{L(X_1, \ldots, X_n)}{n} = D(P_1
|| P_2) \label{eq: a.s. limit of the normalized LLR under
hypothesis H1}
\end{equation}
and otherwise, if hypothesis $H_2$ is true, then a.s.
\begin{equation}
\lim_{n \rightarrow \infty} \frac{L(X_1, \ldots, X_n)}{n} = -D(P_2
|| P_1) \label{eq: a.s. limit of the normalized LLR under
hypothesis H2}
\end{equation}
where the above assumptions on the probability mass functions
$P_1$ and $P_2$ imply that the relative entropies, $D(P_1 || P_2)$
and $D(P_2 || P_1)$, are both finite. Consider the case where for
some fixed constants $\overline{\lambda}, \underline{\lambda} \in
\reals$ that satisfy $$-D(P_2||P_1) < \underline{\lambda} \leq
\overline{\lambda} < D(P_1||P_2)$$ one decides on hypothesis $H_1$
if $ L(X_1, \ldots, X_n) > n \overline{\lambda} $, and on
hypothesis $H_2$ if $ L(X_1, \ldots, X_n) < n
\underline{\lambda}.$ Note that if $\overline{\lambda} =
\underline{\lambda} \triangleq \lambda$ then a decision on the two
hypotheses is based on comparing the normalized log-likelihood
ratio (w.r.t. $n$) to a single threshold $(\lambda)$, and deciding
on hypothesis $H_1$ or $H_2$ if this normalized log-likelihood
ratio is, respectively, above or below $\lambda$. If
$\underline{\lambda} < \overline{\lambda}$ then one decides on
$H_1$ or $H_2$ if the normalized log-likelihood ratio is,
respectively, above the upper threshold $\overline{\lambda}$ or
below the lower threshold $\underline{\lambda}$. Otherwise, if the
normalized log-likelihood ratio is between the upper and lower
thresholds, then an erasure is declared and no decision is taken
in this case.

Let
\begin{eqnarray}
&& \alpha_n^{(1)} \triangleq P_1^n \Bigl( L(X_1, \ldots, X_n) \leq
n \overline{\lambda} \Bigr)
\label{eq: error and erasure event under hypothesis H1} \\
&& \alpha_n^{(2)} \triangleq P_1^n \Bigl( L(X_1, \ldots, X_n) \leq
n \underline{\lambda} \Bigr) \label{eq: error event under
hypothesis H1}
\end{eqnarray}
and
\begin{eqnarray}
&& \beta_n^{(1)}  \triangleq P_2^n \Bigl( L(X_1, \ldots, X_n) \geq
n \underline{\lambda} \Bigr)
\label{eq: error and erasure event under hypothesis H2} \\
&& \beta_n^{(2)}  \triangleq P_2^n \Bigl( L(X_1, \ldots, X_n) \geq
n \overline{\lambda} \Bigr) \label{eq: error event under
hypothesis H2}
\end{eqnarray}
then $\alpha_n^{(1)}$ and $\beta_n^{(1)}$ are the probabilities of
either making an error or declaring an erasure under,
respectively, hypotheses $H_1$ and $H_2$; similarly
$\alpha_n^{(2)}$ and $\beta_n^{(2)}$ are the probabilities of
making an error under hypotheses $H_1$ and $H_2$, respectively.

Let $\pi_1, \pi_2 \in (0,1)$ denote the a-priori probabilities of
the hypotheses $H_1$ and $H_2$, respectively, so
\begin{equation}
P_{\text{e}, n}^{(1)} = \pi_1 \alpha_n^{(1)} + \pi_2 \beta_n^{(1)}
\label{eq: overall probability of a mixed error and erasure event}
\end{equation}
is the probability of having either an error or an erasure, and
\begin{equation}
P_{\text{e}, n}^{(2)} = \pi_1 \alpha_n^{(2)} + \pi_2 \beta_n^{(2)}
\label{eq: overall error probability}
\end{equation}
is the probability of error.

Based on the asymptotic results in \eqref{eq: a.s. limit of the
normalized LLR under hypothesis H1} and \eqref{eq: a.s. limit of
the normalized LLR under hypothesis H2}, which hold a.s. under
hypotheses $H_1$ and $H_2$ respectively, the large deviations
analysis refers to upper and lower thresholds $\overline{\lambda}$
and $\underline{\lambda}$ which are {\em kept fixed} (i.e., these
thresholds do not depend on the block length $n$ of the data
sequence) where
$$ -D(P_2 || P_1) < \underline{\lambda} \leq  \overline{\lambda} <
D(P_1 || P_2).$$ Suppose that instead of having some fixed upper
and lower thresholds, one is interested to set these thresholds
such that as the block length $n$ tends to infinity, they tend
simultaneously to their asymptotic limits in \eqref{eq: a.s. limit
of the normalized LLR under hypothesis H1} and \eqref{eq: a.s.
limit of the normalized LLR under hypothesis H2}, i.e.,
$$ \lim_{n \rightarrow \infty} \overline{\lambda}^{(n)} = D(P_1 ||
P_2), \quad \lim_{n \rightarrow \infty} \underline{\lambda}^{(n)}
= -D(P_2 || P_1).$$ Specifically, let $\eta \in (\frac{1}{2}, 1)$,
and $\varepsilon_1, \varepsilon_2 > 0$ be arbitrary fixed numbers,
and consider the case where one decides on hypothesis $H_1$ if
$L(X_1, \ldots, X_n) > n \overline{\lambda}^{(n)}$, and on
hypothesis $H_2$ if $L(X_1, \ldots, X_n) < n
\underline{\lambda}^{(n)}$ where these upper and lower thresholds
are set to
\begin{eqnarray*}
&& \overline{\lambda}^{(n)} = D(P_1 || P_2) - \varepsilon_1
n^{-(1-\eta)} \\
&& \underline{\lambda}^{(n)} = -D(P_2 || P_1) + \varepsilon_2
n^{-(1-\eta)}
\end{eqnarray*}
so that they approach, respectively, the relative entropies $D(P_1
|| P_2)$ and $-D(P_2 || P_1)$ in the asymptotic case where the
block length $n$ of the data sequence tends to infinity.
Accordingly, the conditional probabilities in \eqref{eq: error and
erasure event under hypothesis H1}--\eqref{eq: error event under
hypothesis H2} are modified so that the fixed thresholds
$\overline{\lambda}$ and $\underline{\lambda}$ are replaced with
the above block-length dependent thresholds
$\overline{\lambda}^{(n)}$ and $\underline{\lambda}^{(n)}$,
respectively. The moderate deviations analysis for binary
hypothesis testing studies the probability of an error event and
the probability of a joint error and erasure event under the two
hypotheses, and it studies the interplay between each of these
probabilities, the block length $n$, and the related thresholds
that tend asymptotically to the limits in \eqref{eq: a.s. limit of
the normalized LLR under hypothesis H1} and \eqref{eq: a.s. limit
of the normalized LLR under hypothesis H2} when the block length
tends to infinity.

In light of the discussion in Section~\ref{subsection: MDP for
real-valued i.i.d. RVs} on the MDP for i.i.d. RVs and the
discussion of its relation to Theorem~\ref{theorem: first refined
concentration inequality} (see Appendix~\ref{appendix: MDP}), and
also motivated by the three recent works in \cite[Section~4.3]{Abbe_thesis},
\cite{AltugW_ISIT2010}
and \cite{He_IT09}, we proceed to consider in the following
moderate deviations analysis for binary hypothesis testing. Our
approach for this kind of analysis is different, and it relies on
concentration inequalities for martingales.

In the following, we analyze the probability of a joint error and
erasure event under hypothesis $H_1$, i.e., derive an upper bound
on $\alpha_n^{(1)}$ in \eqref{eq: error and erasure event under
hypothesis H1}. The same kind of analysis can be adapted easily
for the other probabilities in \eqref{eq: error event under
hypothesis H1}--\eqref{eq: error event under hypothesis H2}.

Under hypothesis $H_1$, let us construct the martingale sequence
$\{U_k, \mathcal{F}_k\}_{k=0}^n$ where $\mathcal{F}_0 \subseteq
\mathcal{F}_1 \subseteq \ldots \mathcal{F}_n$ is the filtration
$$ \mathcal{F}_0 = \{\emptyset, \Omega\}, \quad \mathcal{F}_k =
\sigma(X_1, \ldots, X_k), \; \; \forall \, k \in \{1, \ldots,
n\}$$ and
\begin{equation}
U_k = \expectation_{P_1^n} \bigl[ L(X_1, \ldots, X_n) \; | \;
\mathcal{F}_k \bigr]. \label{eq: martingale sequence U under
hypothesis H1}
\end{equation}
For every $k \in \{0, \ldots, n\}$
\begin{eqnarray*}
&& U_k = \expectation_{P_1^n} \Biggl[  \sum_{i=1}^n
\ln \frac{P_1(X_i)}{P_2(X_i)} \; \Big| \; \mathcal{F}_k  \Biggr] \\
&& \hspace*{0.5cm} =  \sum_{i=1}^k \ln \frac{P_1(X_i)}{P_2(X_i)} +
\sum_{i=k+1}^n \expectation_{P_1^n} \Biggl[
\ln \frac{P_1(X_i)}{P_2(X_i)} \Biggr] \\
&& \hspace*{0.5cm} =  \sum_{i=1}^k \ln \frac{P_1(X_i)}{P_2(X_i)} +
(n-k) D(P_1 || P_2).
\end{eqnarray*}
In particular
\begin{eqnarray}
&& U_0 = n D(P_1 || P_2), \label{eq: initial value of the
martingale U that is related to the binary hypothesis testing}   \\
&& U_n = \sum_{i=1}^n \ln \frac{P_1(X_i)}{P_2(X_i)} = L(X_1,
\ldots, X_n) \label{eq: final value of the martingale U that is
related to the binary hypothesis testing}
\end{eqnarray}
and, for every $k \in \{1, \ldots, n\}$,
\begin{equation}
U_k - U_{k-1} = \ln \frac{P_1(X_k)}{P_2(X_k)} - D(P_1 || P_2).
\label{eq: jumps of the martingale U that is related to the binary
hypothesis testing}
\end{equation}
Let
\begin{equation}
d_1 \triangleq \max_{x \in \mathcal{X}} \left| \ln
\frac{P_1(x)}{P_2(x)} - D(P_1 || P_2) \right| \label{eq: d1}
\end{equation}
so $d_1 < \infty$ since by assumption the alphabet set
$\mathcal{X}$ is finite, and $P_1(x), P_2(x) > 0$ for every $x \in
\mathcal{X}$. From \eqref{eq: jumps of the martingale U that is
related to the binary hypothesis testing} and \eqref{eq: d1},
$|U_k - U_{k-1}| \leq d_1$ a.s. for every $k \in \{1, \ldots,
n\}$, and due to the statistical independence of $\{X_i\}$
\begin{eqnarray}
&& \hspace*{-1cm} \expectation_{P_1^n} \bigl[ (U_k - U_{k-1})^2
\, | \, \mathcal{F}_{k-1} \bigr] \nonumber \\
&& \hspace*{-1cm} = \sum_{x \in \mathcal{X}} \left\{ P_1(x) \left(
\ln \frac{P_1(x)}{P_2(x)} - D(P_1 || P_2) \right)^2 \right\}
\triangleq \sigma_1^2. \label{eq: sigma1 squared for the jumps of
the martingale U}
\end{eqnarray}

Let $\varepsilon_1 > 0$ and $\eta \in (\frac{1}{2}, 1)$ be
two arbitrarily fixed numbers. Then, under hypothesis~$H_1$, it
follows from Theorem~\ref{theorem: first refined concentration
inequality} and the above construction of a martingale that
\begin{eqnarray}
&& P_1^n\bigl( L(X_1, \ldots, X_n) \leq n
\overline{\lambda}^{(n)})
\nonumber \\
&& = P_1^n\bigl( U_n - U_0 \leq -\varepsilon_1 n^\eta \bigr) \nonumber \\
&& \leq \exp \left( -n D\biggl(\frac{\delta_1^{(\eta, n)} +
\gamma_1}{1+\gamma_1} \, \big|\big| \,
\frac{\gamma_1}{1+\gamma_1}\biggr) \right) \label{eq:  1st
inequality for the moderate deviations analysis of binary
hypothesis testing}
\end{eqnarray}
where
\begin{equation}
\delta_1^{(\eta, n)} \triangleq \frac{\varepsilon_1
n^{-(1-\eta)}}{d_1}, \quad \gamma_1 \triangleq
\frac{\sigma_1^2}{d_1^2} \label{delta1 and gamma1 for the moderate
deviations analysis of binary hypothesis testing}
\end{equation}
with $d_1$ and $\sigma_1^2$ from \eqref{eq: d1} and \eqref{eq:
sigma1 squared for the jumps of the martingale U}.

In the following, we will make use of the following lemma:
\begin{lemma}
\begin{equation}
(1+u) \ln(1+u) \geq \left\{
\begin{array}{ll}
u + \frac{u^2}{2}, \quad & u \in [-1, 0] \\[0.2cm]
u+\frac{u^2}{2}-\frac{u^3}{6}, \quad & u \geq 0
\end{array}
\right. \label{eq: inequality for lower bounding the divergence}
\end{equation}
where at $u=-1$, the left-hand side is defined to be zero (it is
the limit of this function when $u \rightarrow -1$ from above).
\label{lemma: inequality for lower bounding the divergence}
\end{lemma}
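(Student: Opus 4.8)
The plan is to prove each of the two inequalities by reducing to a single-variable analysis and comparing Taylor expansions. Define $f(u) \triangleq (1+u)\ln(1+u)$ on the domain $[-1,\infty)$, with the convention $f(-1)=0$ since $\lim_{u\to -1^+} (1+u)\ln(1+u)=0$. The two claimed lower bounds are the quadratic polynomial $g_1(u) \triangleq u + \frac{u^2}{2}$ on $[-1,0]$ and the cubic polynomial $g_2(u) \triangleq u + \frac{u^2}{2} - \frac{u^3}{6}$ on $[0,\infty)$. For each piece I would set $h(u) \triangleq f(u) - g(u)$ for the appropriate $g$, show $h(0)=0$, and then use the sign of the derivatives to conclude $h(u)\geq 0$ on the relevant interval.

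First I would treat the interval $[-1,0]$. Here $h(u) = (1+u)\ln(1+u) - u - \frac{u^2}{2}$. I compute $h(0)=0$ and $h'(u) = \ln(1+u) - u$. A standard fact (itself following from concavity of $\ln$, or from $\ln(1+u)\leq u$) gives $\ln(1+u)\leq u$ for all $u>-1$, so $h'(u)\leq 0$ throughout $[-1,0]$. Since $h(0)=0$ and $h$ is nonincreasing as $u$ increases toward $0$, it follows that $h(u)\geq h(0)=0$ for $u\in[-1,0]$, which is exactly $f(u)\geq g_1(u)$ on this interval. I would also check the endpoint $u=-1$ separately using the limiting value, where the left-hand side is $0$ and the right-hand side is $-1+\frac{1}{2}=-\frac{1}{2}<0$, so the inequality holds there as well.

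Next I would treat the interval $[0,\infty)$ with the cubic bound. Set $h(u) = (1+u)\ln(1+u) - u - \frac{u^2}{2} + \frac{u^3}{6}$. Again $h(0)=0$, and differentiating gives $h'(u) = \ln(1+u) - u + \frac{u^2}{2}$, with $h'(0)=0$. Differentiating once more yields $h''(u) = \frac{1}{1+u} - 1 + u = \frac{u^2}{1+u}$, which is manifestly nonnegative for $u\geq 0$. Hence $h'$ is nondecreasing on $[0,\infty)$, so $h'(u)\geq h'(0)=0$ there, which in turn makes $h$ nondecreasing, giving $h(u)\geq h(0)=0$ and therefore $f(u)\geq g_2(u)$ on $[0,\infty)$.

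The argument is essentially routine, so I do not expect a genuine obstacle; the only point requiring mild care is the behavior at the boundary $u=-1$, where $f$ is defined by its limit rather than by direct substitution, and one must confirm that the quadratic bound remains valid (indeed with room to spare) at that endpoint. The cleanest structuring is the two-derivative reduction: on $[-1,0]$ one derivative plus the elementary bound $\ln(1+u)\leq u$ suffices, whereas on $[0,\infty)$ one needs two derivatives because the first derivative vanishes at $0$ and the sign is revealed only at the level of $h''(u)=\frac{u^2}{1+u}$.
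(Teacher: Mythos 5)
Your proof is correct, and it is precisely the ``elementary calculus'' argument that the paper invokes without spelling out: both pieces reduce to showing a difference function $h$ with $h(0)=0$ has the right monotonicity, via $\ln(1+u)\leq u$ on $[-1,0]$ and via $h''(u)=\frac{u^2}{1+u}\geq 0$ on $[0,\infty)$. The endpoint check at $u=-1$ is handled properly, so nothing is missing.
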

\begin{proof}
The proof follows by elementary calculus.
\end{proof}

From \eqref{delta1 and gamma1 for the moderate deviations analysis
of binary hypothesis testing} and the inequality in
Lemma~\ref{lemma: inequality for lower bounding the divergence},
it follows that
\begin{eqnarray*}
&& D\biggl(\frac{\delta_1^{(\eta, n)} + \gamma_1}{1+\gamma_1} \,
\big|\big| \, \frac{\gamma_1}{1+\gamma_1}\biggr) \\
&& \geq \frac{\gamma_1}{1+\gamma_1} \left[
\biggl(\frac{\delta_1^{(\eta, n)}}{\gamma_1} +
\frac{\bigl(\delta_1^{(\eta, n)}\bigr)^2}{2 \gamma_1^2} -
\frac{\bigl(\delta_1^{(\eta, n)}\bigr)^3}{6 \gamma_1^3}
\biggr) \right. \\
&& \hspace*{1.5cm} \left. + \frac{1}{\gamma_1}
\biggl(-\delta_1^{(\eta, n)} + \frac{(\delta_1^{(\eta, n)})^2}{2}
\biggr) \right] \\
&& = \frac{\bigl(\delta_1^{(\eta, n)}\bigr)^2}{2 \gamma_1} -
\frac{\bigl(\delta_1^{(\eta, n)}\bigr)^3}{6 \gamma_1^2
(1+\gamma_1)} \\[0.1cm]
&& = \frac{\varepsilon_1^2 \, n^{-2(1-\eta)}}{2 \sigma_1^2} \left(
1 - \frac{\varepsilon_1 d_1}{3 \sigma_1^2 (1+\gamma_1)} \,
\frac{1}{n^{1-\eta}} \right)
\end{eqnarray*}
provided that $\delta_1^{(\eta, n)} < 1$ (which holds for $n \geq
n_0$ for some $n_0 \triangleq n_0(\eta, \varepsilon_1, d_1) \in
\naturals$ that is determined from \eqref{delta1 and gamma1 for
the moderate deviations analysis of binary hypothesis testing}).
By substituting this lower bound on the divergence into \eqref{eq:
1st inequality for the moderate deviations analysis of binary
hypothesis testing}, it follows that
\begin{eqnarray}
&& \hspace*{-1.7cm} \alpha_n^{(1)} = P_1^n\bigl( L(X_1, \ldots,
X_n) \leq n D(P_1 || P_2) - \varepsilon_1 n^\eta \bigr) \nonumber \\
&& \hspace*{-1.0cm} \leq \exp \left(-\frac{\varepsilon_1^2 \,
n^{2\eta-1}}{2 \sigma_1^2} \left( 1 - \frac{\varepsilon_1 d_1}{3
\sigma_1^2 (1+\gamma_1)} \, \frac{1}{n^{1-\eta}} \right) \right).
\label{eq: 2nd inequality for the moderate deviations analysis of
binary hypothesis testing}
\end{eqnarray}
Consequently, in the limit where $n$ tends to infinity,
\begin{equation}
\lim_{n \rightarrow \infty} n^{1-2\eta} \ln \, \alpha_n^{(1)} \leq
-\frac{\varepsilon_1^2}{2 \sigma_1^2} \label{eq: 3rd inequality
for the moderate deviations analysis of binary hypothesis testing}
\end{equation}
with $\sigma_1^2$ in \eqref{eq: sigma1 squared for the jumps of
the martingale U}. From the analysis in Section~\ref{subsection:
MDP for real-valued i.i.d. RVs} and Appendix~\ref{appendix: MDP},
it follows that the inequality for the asymptotic limit in
\eqref{eq: 3rd inequality for the moderate deviations analysis of
binary hypothesis testing} holds in fact with equality. To verify
this, consider the real-valued sequence of i.i.d. RVs
$$ Y_i \triangleq \ln \left( \frac{P_1(X_i)}{P_2(X_i)} \right) - D(P_1 || P_2),
\quad i=1, \dots, n $$ that, under hypothesis $H_1$, have zero
mean and variance $\sigma_1^2$. Since, by assumption, the sequence
$\{X_i\}_{i=1}^n$ are i.i.d., then
\begin{equation}
L(X_1, \ldots, X_n) - n D(P_1 || P_2) = \sum_{i=1}^n Y_i,
\label{eq: equality related to the LLR and the sequence Y}
\end{equation}
and it follows from the one-sided version of the MDP in \eqref{eq:
MDP for i.i.d. real-valued RVs} that indeed \eqref{eq: 3rd
inequality for the moderate deviations analysis of binary
hypothesis testing} holds with equality. Moreover,
Theorem~\ref{theorem: first refined concentration inequality}
provides, via the inequality in \eqref{eq: 2nd inequality for the
moderate deviations analysis of binary hypothesis testing}, a
finite-length result that enhances the asymptotic result for $n
\rightarrow \infty$.

In the considered setting of moderate deviations analysis for
binary hypothesis testing, the upper bound on the probability
$\alpha_n^{(1)}$ in \eqref{eq: 2nd inequality for the moderate
deviations analysis of binary hypothesis testing}, which refers to
the probability of either making an error or declaring an erasure
(i.e., making no decision) under the hypothesis $H_1$, decays to
zero sub-exponentially with the length $n$ of the sequence. As
mentioned above, based on the analysis in Section~\ref{subsection:
MDP for real-valued i.i.d. RVs} and Appendix~\ref{appendix: MDP},
the asymptotic upper bound in \eqref{eq: 3rd inequality for the
moderate deviations analysis of binary hypothesis testing} is
tight. A completely similar moderate-deviations analysis can be
also performed under the hypothesis $H_2$. Hence, a
sub-exponential scaling of the probability $\beta_n^{(1)}$ in
\eqref{eq: error and erasure event under hypothesis H2} of either
making an error or declaring an erasure (where the lower threshold
$\underline{\lambda}$ is replaced with
$\underline{\lambda}^{(n)}$) also holds under the hypothesis
$H_2$. These two sub-exponential decays to zero for the
probabilities $\alpha_n^{(1)}$ and $\beta_n^{(1)}$, under
hypothesis $H_1$ or $H_2$ respectively, improve as the value of
$\eta \in (\frac{1}{2}, 1)$ is increased. On the other hand, the
two {\em exponential decays} to zero of the probabilities of error
(i.e., $\alpha_n^{(2)}$ and $\beta_n^{(2)}$ under hypothesis $H_1$
or $H_2$, respectively) improve as the value of $\eta \in
(\frac{1}{2}, 1)$ is decreased; this is due to the fact that, for
a fixed value of $n$, the margin which serves to protect us from
making an error (either under hypothesis $H_1$ or $H_2$) is
increased by decreasing the value of $\eta$ as above (note that by
reducing the value of $\eta$ for a fixed $n$, the upper and lower
thresholds $\overline{\lambda}^{(n)}$ and
$\underline{\lambda}^{(n)}$ are made closer to $D(P_1||P_2)$ from
below and to $-D(P_2||P_1)$ from above, respectively, which
therefore increases the margin that is used for protecting one
from making an erroneous decision). This shows the existence of a
tradeoff, in the choice of the parameter $\eta \in (\frac{1}{2},
1)$, between the probability of error and the joint probability of
error and erasure under either hypothesis $H_1$ or $H_2$ (where
this tradeoff exists symmetrically for each of the two
hypotheses).

In \cite{AltugW_ISIT2010} and
\cite{Polyanskiy_Verdu_Allerton2010}, the authors consider
moderate deviations analysis for channel coding over memoryless
channels. In particular, \cite[Theorem~2.2]{AltugW_ISIT2010} and
\cite[Theorem~6]{Polyanskiy_Verdu_Allerton2010} indicate on a
tight lower bound (i.e., a converse) to the asymptotic result in
\eqref{eq: 3rd inequality for the moderate deviations analysis of
binary hypothesis testing} for binary hypothesis testing. This
tight converse is indeed consistent with the asymptotic result of
the MDP in \eqref{eq: MDP for i.i.d. real-valued RVs} for
real-valued i.i.d. random variables, which implies that the
asymptotic upper bound in \eqref{eq: 3rd inequality for the
moderate deviations analysis of binary hypothesis testing},
obtained via the martingale approach with the refined version of
Azuma's inequality in Theorem~\ref{theorem: first refined
concentration inequality}, holds indeed with equality. Note that
this equality does not follow from Azuma's inequality, so its
refinement was essential for obtaining this equality. The reason
is that, due to Appendix~\ref{appendix: MDP}, the upper bound in
\eqref{eq: 3rd inequality for the moderate deviations analysis of
binary hypothesis testing} that is equal to
$-\frac{\varepsilon_1^2}{2 \sigma_1^2}$ is replaced via Azuma's
inequality by the looser bound $-\frac{\varepsilon_1^2}{2 d_1^2}$
(note that, from \eqref{eq: d1} and \eqref{eq: sigma1 squared for
the jumps of the martingale U}, $\sigma_1 \leq d_1$ where in
general $\sigma_1$ may be significantly smaller than $d_1$).

\section{Summary}
\label{section: summary} This paper is focused on the moderate
deviations analysis of binary hypothesis testing. The analysis is based on a concentration
inequality for discrete-parameter martingales with bounded jumps,
which forms a refined version of Azuma's inequality (see
\cite[Corollary~2.4.7]{Dembo_Zeitouni}). The relation of this
concentration inequality to the moderate deviations principle for
i.i.d. random variables is considered. This paper presents in
part the work in \cite{Sason_submitted_paper}, and it exemplifies
the use of a refinement of Azuma's inequality in an
information-theoretic aspect. Further information-theoretic
applications are considered in, e.g., \cite{Sason_Aachen}
and \cite{kostis_ISIT12}. The slides are available in \cite{slides}.

\vspace*{0.2cm}
{\em{Acknowledgment}}: One of the reviewers pointed
out that the moderate deviations analysis in this work
can be done alternatively by relying on results, e.g.,
from \cite{Arkhangel'skii} or \cite{Rozovsky}.
We thank the reviewer for this note, and we currently
study this line of work.

\appendices

\section{Analysis Related to the Moderate Deviations Principle For
i.i.d. RVs (See Section~\ref{subsection: MDP for real-valued
i.i.d. RVs})} \label{appendix: MDP}

It is demonstrated in the following that, in contrast to Azuma's
inequality, Theorem~\ref{theorem: first refined concentration
inequality} provides an upper bound on $\pr\left(
\Big|\sum_{i=1}^n X_i \Big| \geq \alpha n^{\eta} \right)$ for
$\alpha \geq 0$, which coincides with the correct asymptotic
result in \eqref{eq: MDP for i.i.d. real-valued RVs}. It is proved
under the further assumption that there exists some constant $d >
0$ such that $|X_k| \leq d$ a.s. for every $k \in \naturals$
(since the RVs $\{X_k\}$ are assumed to be i.i.d., it is
sufficient to require it for $k=1$). Let us define the martingale
sequence $\{S_k, \mathcal{F}_k\}_{k=0}^n$ where $S_k \triangleq
\sum_{i=1}^k X_i$ and $\mathcal{F}_k \triangleq \sigma(X_1,
\ldots, X_k)$ for every $k \in \{1, \ldots, n\}$ with $S_0 = 0$
and $\mathcal{F}_0 = \{\emptyset, \mathcal{F}\}$.

\subsubsection{Analysis related to Azuma's inequality}
The martingale sequence $\{S_k, \mathcal{F}_k\}_{k=0}^n$ has
uniformly bounded jumps, where $|S_k - S_{k-1}| = |X_k| \leq d$
a.s. for every $k \in \{1, \ldots, n\}$. Hence it follows from
Azuma's inequality that, for every $\alpha \geq 0$,
\begin{equation*}
\pr\left( |S_n| \geq \alpha n^{\eta} \right) \leq 2
\exp\left(-\frac{\alpha^2 n^{2\eta-1}}{2d^2}\right)
\end{equation*}
and therefore
\begin{equation}
\lim_{n \rightarrow \infty} n^{1-2 \eta} \; \ln \pr\bigl( |S_n|
\geq \alpha n^{\eta} \bigr) \leq -\frac{\alpha^2}{2d^2}.
\label{eq: MDP scaling from Azuma's inequality for the sum of
i.i.d. real-valued RVs}
\end{equation}
This differs from the limit in \eqref{eq: MDP for i.i.d.
real-valued RVs} where $\sigma^2$ is replaced by $d^2$, so Azuma's
inequality does not provide the correct asymptotic result in
\eqref{eq: MDP for i.i.d. real-valued RVs} (unless $\sigma^2 =
d^2$, i.e., $|X_k|=d$ a.s. for every $k$).

\subsubsection{Analysis related to Theorem~\ref{theorem: first refined concentration inequality}}
From Theorem~\ref{theorem: first refined concentration
inequality}, it follows that for every $\alpha \geq 0$,
\begin{equation*}
\pr(|S_n| \geq \alpha n^{\eta}) \leq 2 \exp\left(-n \,
D\biggl(\frac{\delta'+\gamma}{1+\gamma} \Big|\Big|
\frac{\gamma}{1+\gamma}\biggr) \right)
\end{equation*}
where $\gamma$ is introduced in \eqref{eq: notation}, and
$\delta'$ is given by
\begin{equation}
\delta' \triangleq \frac{\frac{\alpha}{n^{1-\eta}}}{d} = \delta
n^{-(1-\eta)} \label{eq: new delta'}
\end{equation}
due to the definition of $\delta$ in \eqref{eq: notation}. Hence,
it follows that
\begin{eqnarray*}
&& \pr(|S_n| \geq \alpha n^{\eta}) \\
&& \leq 2 \exp\left( -\frac{\delta^2 n^{2\eta-1}}{2\gamma} \left[
1 + \frac{\alpha (1-\gamma)}{3 \gamma d} \cdot n^{-(1-\eta)} +
\ldots \right] \right)
\end{eqnarray*}
for every $n \in \naturals$, and therefore (since, from \eqref{eq:
notation}, $\frac{\delta^2}{\gamma} = \frac{\alpha^2}{\sigma^2})$
\begin{equation}
\lim_{n \rightarrow \infty} n^{1-2 \eta} \; \ln \pr\bigl( |S_n|
\geq \alpha n^{\eta} \bigr) \leq -\frac{\alpha^2}{2 \sigma^2}.
\label{eq: MDP scaling from Theorem 2 for the sum of i.i.d.
real-valued RVs}
\end{equation}
Hence, this bound coincides with the exact limit
in \eqref{eq: MDP for i.i.d. real-valued RVs}.


\begin{thebibliography}{99}
\bibitem{Abbe_thesis}
E. A. Abbe, {\em Local to Global Geometric Methods in Information
Theory}, Ph.D. dissertation, MIT, Boston, MA, USA, June 2008.
\bibitem{AlonS_tpm3}
N. Alon and J. H. Spencer, {\em The Probabilistic Method}, Wiley
Series in Discrete Mathematics and Optimization, Third Edition,
2008.
\bibitem{Arkhangel'skii}
A. N. Arkhangel'skii, ``Lower bounds for probabilities of large
deviations for sums of independent random variables,'' {\em
Theory of Probability and Applications}, vol.~34, no.~4, pp.~565-575,
1989.
\bibitem{AltugW_ISIT2010}
Y. Altu\v{g} and A. B. Wagner, ``Moderate deviations analysis of
channel coding: discrete memoryless case,'' {\em Proceedings 2010
IEEE International Symposium on Information Theory (ISIT~2010)},
pp.~265--269, Austin, Texas, USA, June~2010.
\bibitem{Azuma}
K. Azuma, ``Weighted sums of certain dependent random variables,''
{\em Tohoku Mathematical Journal}, vol.~19, pp.~357--367, 1967.
\bibitem{survey2006}
F. Chung and L. Lu, ``Concentration inequalities and martingale
inequalities: a survey,'' {\em Internet Mathematics}, vol.~3,
no.~1, pp.~79--127, March 2006.
\bibitem{Cover and Thomas}
T. M. Cover and J. A. Thomas, {\em Elements of Information
Theory}, John Wiley and Sons, second edition, 2006.
\bibitem{Csiszar_Shields_FnT}
I. Csisz\'{a}r and P. C. Shields, {\em Information Theory and
Statistics: A Tutorial}, Foundations and Trends in Communications
and Information Theory, vol.~1, no.~4, pp.~417--528, 2004.
\bibitem{Dembo_paper96}
A. Dembo, ``Moderate deviations for martingales with bounded
jumps,'' {\em Electronic Communications in Probability}, vol.~1,
no.~3, pp.~11--17, March 1996.
\bibitem{Dembo_Zeitouni}
A. Dembo and O. Zeitouni, {\em Large Deviations Techniques and
Applications}, Springer, second edition, 1997.
\bibitem{He_IT09}
D. He, L. A. Lastras-Monta\~{n}o, E. Yang, A. Jagmohan and J.
Chen, ``On the redundancy of Slepian-Wolf coding,'' {\em IEEE
Trans. on Information Theory}, vol.~55, no.~12, pp.~5607--5627,
December 2009.
\bibitem{Hoeffding}
W. Hoeffding, ``Probability inequalities for sums of bounded
random variables,'' {\em Journal of the American Statistical
Association}, vol.~58, no.~301, pp.~13--30, March 1963.
\bibitem{Hollander_book_2000}
F. den Hollander, {\em Large Deviations}, Fields Institute
Monographs, American Mathematical Society, 2000.
\bibitem{McDiarmid_bounded_differences_Martingales_1989}
C. McDiarmid, ``On the method of bounded differences,''
{\em Surveys in Combinatorics}, vol.~141, pp.~148--188, Cambridge
University Press, Cambridge, 1989.
\bibitem{McDiarmid_tutorial}
C. McDiarmid, ``Concentration,'' {\em Probabilistic Methods for
Algorithmic Discrete Mathematics}, pp.~195--248, Springer, 1998.
\bibitem{Polyanskiy_Poor_Verdu_IT2010}
Y. Polyanskiy, H. V. Poor, and S. Verd\'{u}, ``Channel coding rate
in finite blocklength regime,'' {\em IEEE Trans. on Information
Theory}, vol.~56, no.~5, pp.~2307--2359, May~2010.
\bibitem{Polyanskiy_Verdu_Allerton2010}
Y. Polyanskiy and S. Verd\'{u}, ``Channel dispersion and moderate
deviations limits of memoryless channels,'' Proceedings
Forty-Eighth Annual Allerton Conference,  pp.~1334--1339, UIUC,
Illinois, USA, October 2010.
\bibitem{Rozovsky}
L. V. Rozovsky, ``Estimate from below for large-deviation probabilities
of a sum of independent random variables with finite variances,'' {\em Journal
of Mathematical Sciences}, vol.~109, no.~6, May 2002.
\bibitem{Sason_submitted_paper}
I. Sason, ``On refined versions of the Azuma-Hoeffding inequality
with applications in information theory,'' last updated in July 2012. [Online].
Available: \url{http://arxiv.org/pdf/1111.1977v5.pdf}.
\bibitem{Sason_Aachen}
I. Sason, ``On the concentration of the crest factor for OFDM
signals,'' {\em Proceedings of the 2011 8th International
Symposium on Wireless Communication Systems (ISWCS '11)},
pp.~784--788, Aachen, Germany, November 2011. [Online]. Available:
\url{http://arxiv.org/abs/1111.1982}.
\bibitem{slides}
I. Sason, "On Concentration and moderate deviations analysis of binary hypothesis testing,''
presentation is online available at \url{http://webee.technion.ac.il/people/sason/ISIT2012a_presentation.pdf}.
\bibitem{Tan_arxiv11}
V. Y. F. Tan, ``Moderate-deviations of lossy source coding for
discrete and Gaussian sources,''
\url{http://arxiv.org/abs/1111.2217}, November~2011.
\bibitem{Williams}
D. Williams, {\em Probability with Martingales}, Cambridge University
Press, 1991.
\bibitem{kostis_ISIT12}
K. Xenoulis, N. Kalouptsidis and I. Sason, ``New achievable rates for nonlinear Volterra
channels via martingale inequalities,'' {\em Proceedings of the 2012 IEEE International
Symposium of Information Theory}, pp.~1430--1434, MIT, Boston, USA, July 2012.
\end{thebibliography}
\end{document}